\documentclass[10pt,conference]{IEEEtran}

\usepackage{bm}
\usepackage{mathrsfs}
\usepackage{textcomp}
\usepackage{epsfig}
\usepackage{indentfirst}
\usepackage{amsmath}
\usepackage{amssymb}
\usepackage{subfigure}

\usepackage{amsthm}

\newtheorem{theorem}{Theorem}
\newtheorem{lemma}[theorem]{Lemma}

\theoremstyle{definition}

\theoremstyle{remark}

\usepackage{cite}

\begin{document}
\title{Rateless Codes with Progressive Recovery for Layered Multimedia Delivery}
\author{
{Zhao Chen$^1$, Liuguo Yin$^{2,3}$, Mai Xu$^1$, and Jianhua Lu$^1$}\\
State Key Laboratory on Microwave and Digital Communications\\
Tsinghua National Laboratory for Information Science and Technology\\
$^1$Department of Electronic Engineering, Tsinghua University\\ $^2$School of Aerospace, Tsinghua University\\
$^3$EDA Lab, Research Institute of Tsinghua University in Shenzhen\\
\vspace{-0.2cm}
Email: zhao-chen10@mails.tsinghua.edu.cn; \{yinlg, xumai, lhh-dee\}@tsinghua.edu.cn}

\maketitle
\begin{abstract}
This paper proposes a novel approach, based on unequal error protection, to enhance rateless codes with progressive recovery for layered multimedia delivery. With a parallel encoding structure, the proposed Progressive Rateless codes (PRC) assign unequal redundancy to each layer in accordance with their importance. Each output symbol contains information from all layers, and thus the stream layers can be recovered progressively at the expected received ratios of output symbols. Furthermore, the dependency between layers is naturally considered.
The performance of the PRC is evaluated and compared with some related UEP approaches. Results show that our PRC approach provides better recovery performance with lower overhead both theoretically and numerically.
\end{abstract}


\section{Introduction}\label{Intro}
Rateless codes, also known as fountain codes\cite{byers2002digital}, have been proposed as one of the promising error-correcting codes recently. It has an asymptotic optimal recovery property over binary erasure channels (BEC) without any knowledge of the channel.
Compared with conventional forward error correction (FEC) codes, e.g. Reed-Solomon (RS) code, rateless codes can generate a potentially infinite stream of independent encoding symbols\footnote{Note that a symbol is the smallest encoding unit of data, which have the same size. One or more symbols will be contained in a packet.} on the fly, whereas the conventional ones must be selected a fixed code rate in advance according to the channel state. Besides, rateless codes have a lower encoding and decoding complexity.
Therefore, rateless codes can be very suitable for transmission data packets on different kinds of lossy channels, especially for fast time-varying channels or broadcast/multicast channels. In those scenarios, rateless codes can be adapted to the channel very well even with unknown or time-varying erasure probabilities, which means rateless codes can be widely applied in many applications such as Internet, mobile TV.

Multimedia contents are fragile to packet losses. And the error effects of packet losses change with their importance. Moreover, in order to achieve scalable and graceful streaming, layered multimedia codecs have been proposed in standards, such as JPEG2000\cite{Rabbani2002} and the scalable video coding (SVC)\cite{schwarz2007overview} extension of the H.264/AVC standard. In these standards, media source can be encoded into several stream layers with different importance. Lower layers, which should be decoded first, are more important and the loss of a lower layer will affect all other higher layers. So it needs to be considered to give stronger FEC protection to more important layer packets.

Luby Transform (LT) codes\cite{luby2002lt} and Raptor codes\cite{shokrollahi2006raptor}, as two state-of-the-art rateless codes, have been proved to be efficient FEC techniques to increase service robustness in multimedia delivery systems. What's more, Raptor codes have been adopted as the application layer FEC solution in current standards, such as 3GPP MBMS\cite{3GPP_MBMS} and DVB-H\cite{DVB_H}.
Nevertheless, in traditional rateless codes, all stream packets are protected by equal error protection (EEP). 
If receivers are not able to receive enough symbols, none of the source symbols can be recovered. Therefore, rateless codes with unequal error protection (UEP) are required to protect layered stream data in a more efficient way.

Recently, several rateless codes with UEP property have been proposed, which benefit in performance of the transmitted layered multimedia. In \cite{rahnavard2007rateless, cataldi2010sliding, lu2010joint, kozat2007unequal}, each layer is protected separately by assigning different redundancy according to their importance, but it is inefficient sometimes that higher layer's data may be decoded before the lower layer's, since the dependence between layers is ignored. In \cite{sejdinovic2009expanding, hellge2011layer}, with a layer-aware recovery, they sacrifice the protection for higher layers because the irregular encoding structure for high layers is inefficient, where the full recovery performance is degraded.

In this paper, a novel UEP rateless approach, named Progressive Rateless codes (PRC), is proposed. In this approach, taking the dependence into account, we reshape the stream layers and encode them parallelly without weakening the protection of higher layers. As output symbols received gradually, the stream layers can be recovered progressively one after another following their importance.

The remainder of the paper is organized as follows. In Section \ref{Back&ReWork}, we will introduce the background of rateless codes and give an overview of some related rateless UEP approaches. Section \ref{ProRaptor} generally describes our approach and conducts a theoretical probability analysis of PRC, in comparison with other approaches. Some numerical results are shown in Section \ref{SimResults}. Finally, we conclude the paper in Section \ref{Conclusion}.

\section{Background and Related Work}\label{Back&ReWork}
\subsection{Review of rateless codes}

LT codes is the first practical rateless code. Assume that we have $k$ source symbols to be transmitted.
Let $\bm{\Omega}(x)= \sum\nolimits_{i=1}^k{\Omega _i x^i}$ represent a degree distribution, $\Omega _i$ stands for the probability of degree $i$ and satisfies $\sum\nolimits_{i=1}^k{\Omega _i}=1$. The procedure of generating a encoding symbol is as follows.

\begin{enumerate}
  \item {Select an encoding degree $d$ with distribution $\bm{\Omega}(x)$.}
  \item {Choose $d$ input symbols randomly and uniformly in $k$ source symbols as neighbors of the encoding symbol.}
  \item {Perform bitwise XOR operation on the $d$ chosen symbols to generate the encoding symbol.}
\end{enumerate}

After the above procedure, an encoding symbol is transmitted to the receiver. If $d=1$, the encoding symbol is just a duplication of the unique input symbol. This procedure will be executed repeatedly and a potentially infinite encoding symbol stream can be generated until enough encoding symbols are collected at the receiver to recover all source symbols.

At the receiver, both belief propagation process (BP)\cite{luby2002lt} and maximum likelihood decoding (ML)\cite{3GPP_MBMS} can be applied to the decoding of LT codes. The procedure of BP process is as follows.

\begin{enumerate}
  \item {Initial step: search for receiving symbols with degree one and release them to recover their unique neighbor input symbols to a buffer, called the \emph{ripple}.}
  \item {Process every input symbol in the \emph{ripple} as follows until the \emph{ripple} becomes empty.
      \begin{enumerate}
        \item {Remove the input symbol from receiving symbols as a neighbor.}
        \item {Release such receiving symbols subsequently with exactly one remaining neighbor and recover their neighbors to the \emph{ripple}.}
      \end{enumerate}
        }
\end{enumerate}

BP process fails if at least one source symbol remaining unrecovered in the end. The key point of successful decoding is the perfect design of the degree distribution. Fortunately, it was proved in \cite{luby2002lt} such distribution exists and all source symbols can be recovered by any (1+$\epsilon$)$k$ encoding symbols. $\epsilon$ is the decoding overhead, it has achieved capacity-approaching behavior with very low overhead when $k\rightarrow\infty$, $\epsilon\rightarrow0$.

ML decoding, also known as full rank decoding, is executed by solving a set of linear equations in $\mathbb{F}_2^k$, since each encoding symbol is a linear combination of source symbols. It will be successful iff. the set of equations is full rank. Compared with BP process, ML decoding has higher decoding efficiency but higher decoding complexity.

Raptor codes\cite{shokrollahi2006raptor}, as an extension of LT codes, have been proposed with linear time encoding and decoding using a pre-coder of low-density parity-check (LDPC) codes. Our approach will follow Raptor codes with a modified encoding structure.

\subsection{Related work}\label{PrioriWork}

With applying rateless codes with UEP, various layered delivery techniques have been studied, which can be divided into three groups.

Rahnavard and Fekri\cite{rahnavard2007rateless}, first of all, presented a distribution-based approach. They introduced UEP at the LT encoding stage and designed a nonuniformly degree distribution such that lower layer symbols can be selected with higher probability.
With achieving unequal recovery of different layers, the altered distribution weakens the code performance and results in a larger overhead.

Another group of UEP designs are pre-coding based approaches \cite{kozat2007unequal, lu2010joint}. Without making modifications to the traditional rateless code structure, firstly layer packets are pre-coded with different code rate proportionally according to their importance, where lower layer packets are assigned to lower pre-coding rate. Then pre-coded packets are passed to a rateless encoder. Since the intermediate performance of rateless code is poor, the recovery of lower layers suffers.

The third and typical one is the redundancy-based strategy \cite{hellge2011layer, cataldi2010sliding, sejdinovic2009expanding}. As shown in Fig. \ref{Fig.SP-LA-FEC}, suppose that there are two stream layers to be delivered as \emph{Layer 2} is dependent on \emph{Layer 1}. Stream layers are encoded by two different rateless encoders and given redundant symbols proportionally with their importance. There are two types of such strategy \cite{hellge2011layer}. The separate FEC (SP-FEC) protects the two layers independently, while the layer-aware FEC (LA-FEC) extends protection following the dependency between stream layers, i.e., \emph{Layer 1} is not only protected by \emph{FEC 1} but also protected by \emph{FEC 1+2}, together with \emph{Layer 2}.
The LA-FEC improves the recovery of \emph{Layer 1} at the expense of \emph{Layer 2}, since the encoding structure of \emph{FEC 1+2} is slightly changed from original rateless codes. Next, we will have a detailed description of our approach and compare it with the redundancy-based approaches.

\begin{figure}
\centering
\subfigure[A block of a two-layer SP-FEC with $N$ output symbols.]
{
    \label{SP-FEC}
    \includegraphics[width = 7cm]{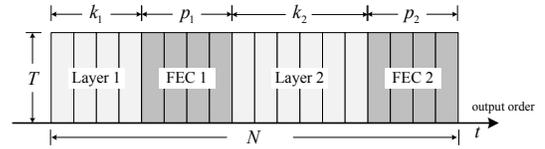}
}

\subfigure[A block of a two-layer LA-FEC with $N$ output symbols.]
{
    \label{LA-FEC}
    \includegraphics[width = 7cm]{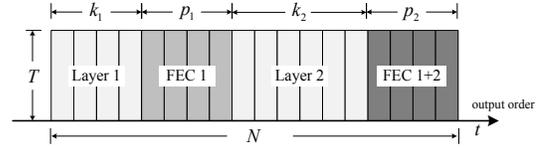}
}
\caption{Structure of two types of separate unequal protection strategy, whose $N$ output symbols are transmitted from left to right sequentially.}
\label{Fig.SP-LA-FEC} 
\vspace{-1.0em}
\end{figure}

\begin{figure}
\centering
\includegraphics[width=7.0cm]{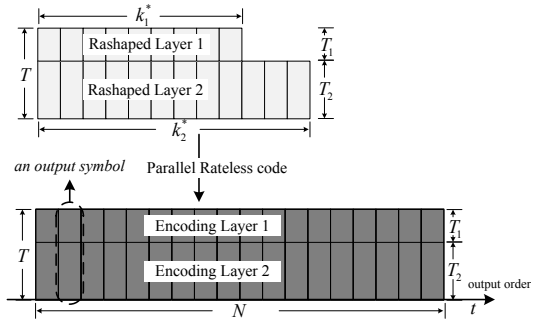}
\vspace{-1.0em}
\caption{Encoding procedure of a block of a two-layer PRC, whose $N$ output symbols are transmitted from left to right sequentially.}
\label{Fig.ProRateless}
\vspace{-1.0em}
\end{figure}

\section{Progressive Rateless Codes}\label{ProRaptor}
In this section, we propose the Progressive Rateless code (PRC) to enhance traditional rateless codes with UEP capability.
In our approach, to guarantee the optimized recovery performance, we alter the encoding structure with maintaining the parameters of rateless code, e.g. degree distribution. Meanwhile, with the efficient recovery of rateless codes, the dependency between layers has been satisfied to come to a progressive recovery of the layered multimedia.


\subsection{Design and Implementation}\label{DeProRaptor}
We consider a layered multimedia data stream to be transmitted over an erasure channel. Assume that an $m$-layer video stream is partitioned into several source blocks with the size of $K$ symbols, where the importance of symbols decreasing from \emph{Layer $1$} to \emph{Layer $m$}. Let $k_i$ be the number of source symbols of \emph{Layer $i$}, so that $K = \sum\nolimits_{i = 1}^m {k_i}$. Let $T$ be the symbol length in bytes, thus each layer has $k_i \cdot T$ bytes and the total length of the block will be $K \cdot T = \sum\nolimits_{i = 1}^m {k_i \cdot T}$ bytes. Note that $\frac{k_i}{K}$ is a constant for \emph{Layer i} in a certain layered stream as the source block size $K$ changes.

Given total broadcasting bandwidth, the overall coding rate $\gamma = \frac{K}{N}$ is fixed for all possible $K$, where $N$ is the output block size, i.e. number of output symbols for each source block, to protect the layered data stream from packet losses. Thus the total length of redundant symbols is $\left( N - K \right) \cdot T$ bytes. Based on these conditions, our PRC approach will generate encoding symbols in a parallel way.

Before rateless encoding, all $m$ layers are reshaped with symbol lengths of $\{T_1,T_2,...,T_m\}$ bytes respectively, ensuring $T = \sum\nolimits_{i = 1}^m {T_i}$. Then the number of reshaped symbols in \emph{Layer i} becomes $k_i^{\ast} = \frac{k_i \cdot T }{T_i}$. Each reshaped layer is passed through a rateless encoder to generate $N$ reshaped encoding symbols, where an output symbol is formed by packing $m$ reshaped encoding symbols, one from each encoding layer. So there will be $N$ output symbols with each symbol packing encoding data from all layers. As shown in Fig. \ref{Fig.ProRateless}, a two-layer PRC layered delivery is illustrated, where an output symbol is generated by combining two reshaped encoding symbols.

At the decoder, assume that $R$ output symbols are received, of course $R \leq N$ due to packet losses. The received symbols are firstly unpacked to separate reshaped symbols of each layer, which are then passed to $m$ different rateless decoders, respectively. Lastly the message block are recovered layer by layer at the decoders.

\subsection{Recovery Performance Analysis}\label{Analysis}
In this part, we will make an combinational analysis of recovery probability of PRC, in comparison with SP-FEC and LA-FEC in Fig. \ref{Fig.SP-LA-FEC}.
To make a fair comparison, for \emph{Layer i}, we have redundant data of the equal length in all approaches, i.e., $p_i \cdot T = (N - k_i^{\ast}) \cdot T_i$, where $p_i$ is the number of redundant symbols. And then we have the number of output symbols $n_i = k_i + p_i$, the coding rate $r_i = \frac{k_i}{n_i}$. Thus we obtain
\begin{equation}\label{K_star}
    k_i^{\ast} = \frac{k_i \cdot T}{T_i} = \frac{k_i \cdot T}{n_i \cdot T / N} = N \cdot \frac{k_i}{n_i}= N \cdot r_i.
\end{equation}
which shows the coding rate of reshaped \emph{Layer i} is also $r_i$. In SP-FEC, let $\eta_i = \frac{n_i}{N}$ be the output ratio of \emph{Layer i}, which will be a constant once $r_i$ is determined.

Let $\textrm{Pr}_{i}(R)$ be the recovery probability of \emph{Layer i} in PRC. Without loss of generality, we consider the ideal recovery of rateless codes, i.e., $k_i^{\ast}$ source symbols can be recovered as soon as $R \geq k_i^{\ast}$ encoding symbols are received. Then we have
\begin{equation}\label{Equation:ProRateless}
    \textrm{Pr}_{i}(R) =
    \begin{cases}
        1, & \hbox{$R \geq k_i^{\ast}$;} \\
        0, & \hbox{$R < k_i^{\ast}$.}
    \end{cases}
\end{equation}
which indicates that for \emph{Layer i}, it can be recovered from at most  $N - k_i^{\ast}$ packet losses with probability 1.
Therefore, to recover layered data stream progressively from \emph{Layer 1} to \emph{Layer m}, we can make $k_1^{\ast} \leq k_2^{\ast} \leq ... \leq k_m^{\ast}$. From Eq. (\ref{K_star}), we know that it can be also represented by $r_1 \leq r_2 \leq ... \leq r_m$. So we can protect \emph{Layer i} by assigning suitable $r_i$, expecting to recover it after receiving $r_i$ ratio of output symbols.

Following the results in \cite{hellge2011layer}, we only investigate the SP-FEC without considering dependency for simplicity. The analysis of LA-FEC will be similar to that and when $N$ grows large, there will be little difference between the two approaches.

In SP-FEC, let $\mathrm{Pr}_{i}^{'}(R)$ be the recovery probability of \emph{Layer i}. With the ideal recovery assumption of rateless codes, it can be recovered by at least $k_i$ out of $n_i$ output symbols from \emph{Layer i}.

\begin{itemize}
  \item {For $R < k_i$, $\mathrm{Pr}_{i}^{'}(R) = 0$.}
  \item {For $k_i \leq R < N - (n_i - k_i)$, 
  \begin{equation}\label{Equation:SP-FEC}
    \mathrm{Pr}_{i}^{'}(R) = {\sum\limits_{x={k_i}}^{{\min\{R,n_i\}}}
    \frac{\binom{n_i}{x} \binom{N-n_i}{R-x}} {\binom{N}{R}}}
    = {\sum\limits_{x={k_i}}^{{\min\{R,n_i\}}}
    \frac{\binom{R}{x} \binom{N-R}{n_i-x}} {\binom{N}{n_i}}}
  \end{equation}}
  \item {For $R \geq N - (n_i - k_i)$, $\mathrm{Pr}_{i}^{'}(R) = 1$.}
\end{itemize}

It's clear that $\mathrm{Pr}_{i}^{'}(R)$ is the tail probability of a hypergeometric distribution with parameters of $X \sim \mathcal{H}(n_i,R,N)$, i.e.,
\begin{equation}\label{Eq:Pr_i_Definition}
    \mathrm{Pr}_{i}^{'}(R) = P(X \geq k_i | n_i,R,N)
\end{equation}

\begin{figure}
\centering
\includegraphics[height = 6cm]{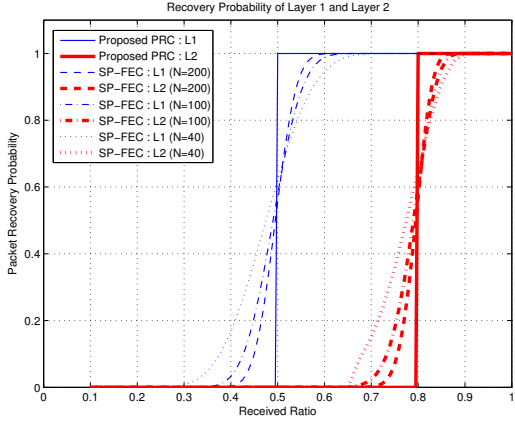}
\caption{Recovery probability of proposed PRC and SP-FEC with receiving different ratio of output symbols. Note that L1 and L2 denote \emph{Layer 1} and \emph{Layer 2}, respectively. The coding rate for the two layers are  $r_1 = 1/2$ and $r_2 = 4/5$.}
\label{Fig.CompareTheory}
\vspace{-1.0em}
\end{figure}

Let $r = R/N$ be the received ratio of all output symbols. In Fig. \ref{Fig.CompareTheory}, several curves of recovery probability are shown, normalized by the received ratio. Note that in practise typically multimedia codecs work well under a packet loss rate (PLR) of no more than $10^{-4}$, so the recovery performance can be measured by the \emph{Successful Received Ratio (SRR}), where the recovery probability goes above $1 - 10^{-4}$. From Fig. \ref{Fig.CompareTheory}, we can see the SRR of PRC is much smaller than SP-FEC for both \emph{Layer 1} and \emph{Layer 2}.

To find the relationship between the two approaches, we will show some properties of $\mathrm{Pr}_{i}^{'}(R)$ as the following lemmas.

\begin{lemma}\label{Le:Increase}
$\mathrm{Pr}_{i}^{'}(R)$ is a non-decreasing function of the number of received symbols $R$.
\end{lemma}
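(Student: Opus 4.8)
The plan is to interpret $\mathrm{Pr}_{i}^{'}(R)$ probabilistically and establish monotonicity through a coupling argument rather than a direct manipulation of the hypergeometric sum in Eq.~(\ref{Equation:SP-FEC}). Recall from Eq.~(\ref{Eq:Pr_i_Definition}) that $\mathrm{Pr}_{i}^{'}(R) = P(X \geq k_i \mid n_i, R, N)$, where among the $N$ output symbols exactly $n_i$ belong to \emph{Layer $i$}, and $X$ counts how many of the $R$ received symbols are \emph{Layer $i$} symbols. Receiving $R$ symbols is equivalent to drawing a uniformly random $R$-subset of the $N$ output symbols, and \emph{Layer $i$} is declared recovered precisely when $X \geq k_i$.

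First I would place all the sample sizes on a single probability space. Draw a uniformly random ordering of the $N$ output symbols (equivalently, sample them one at a time without replacement), and for each $R$ let $S_R$ denote the set consisting of the first $R$ symbols in this ordering. By symmetry $S_R$ is uniform over all $R$-subsets, so $X_R := |S_R \cap L_i|$, where $L_i$ is the set of $n_i$ \emph{Layer $i$} symbols, has exactly the law $\mathcal{H}(n_i, R, N)$; hence $\mathrm{Pr}_{i}^{'}(R) = P(X_R \geq k_i)$ for every $R$. The crucial feature of this construction is the nesting $S_R \subseteq S_{R+1}$, which holds with probability one.

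The monotonicity then follows immediately. Since $S_R \subseteq S_{R+1}$ we have $X_R \leq X_{R+1}$ pointwise on the sample space, so the event $\{X_R \geq k_i\}$ is contained in $\{X_{R+1} \geq k_i\}$. Taking probabilities gives $\mathrm{Pr}_{i}^{'}(R) \leq \mathrm{Pr}_{i}^{'}(R+1)$, and iterating over the integer values of $R$ yields that $\mathrm{Pr}_{i}^{'}$ is non-decreasing. The piecewise definition is consistent with this: $\mathrm{Pr}_{i}^{'}(R)=0$ for $R<k_i$ and $\mathrm{Pr}_{i}^{'}(R)=1$ for $R \geq N-(n_i-k_i)$ are exactly the two monotone extremes, so no separate boundary checking is required.

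I expect the only real subtlety, rather than a genuine obstacle, to be verifying that the marginal law of $X_R$ under the sequential-sampling coupling is indeed the hypergeometric $\mathcal{H}(n_i, R, N)$ matching Eq.~(\ref{Equation:SP-FEC}); once that identification is made, the containment of events does all the work. An alternative, purely algebraic route would difference the sum in Eq.~(\ref{Equation:SP-FEC}) and telescope using Pascal-type identities for binomial coefficients, but that computation is considerably more delicate at the summation limit $\min\{R,n_i\}$, which is precisely the bookkeeping the coupling lets us sidestep.
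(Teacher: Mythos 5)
Your proof is correct, but it takes a genuinely different route from the paper's. The paper argues algebraically: starting from the second form of the sum in Eq.~(\ref{Equation:SP-FEC}) and applying Pascal's identity to the binomial coefficients, it shows
\[
\mathrm{Pr}_{i}^{'}(R+1)=\mathrm{Pr}_{i}^{'}(R)+\frac{\binom{R}{k_i-1}\binom{N-R-1}{n_i-k_i}}{\binom{N}{n_i}},
\]
so the increment is an explicit nonnegative quantity, with the constant regimes $R<k_i$ and $R\geq N-(n_i-k_i)$ handled separately. Your coupling argument --- realizing all sample sizes on one probability space via a uniform random ordering, so that $S_R\subseteq S_{R+1}$ and hence $\{X_R\geq k_i\}\subseteq\{X_{R+1}\geq k_i\}$ --- reaches the same conclusion with no manipulation of binomial sums, and it treats all three regimes of the piecewise definition uniformly; the identification you flag as the one subtlety (that $X_R$ has law $\mathcal{H}(n_i,R,N)$) is indeed the standard fact that the first $R$ elements of a uniform permutation form a uniform $R$-subset, which is exactly the sampling model behind Eq.~(\ref{Eq:Pr_i_Definition}). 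What the paper's computation buys in exchange is the exact value of the increment, from which it observes that $\mathrm{Pr}_{i}^{'}(R)$ is \emph{strictly} increasing on the middle range $k_i\leq R<N-(n_i-k_i)$ --- a refinement your coupling gives only qualitatively, though you could recover it by noting that the difference equals $P\bigl(X_R=k_i-1 \text{ and the } (R+1)\text{-st drawn symbol lies in Layer } i\bigr)$, which is positive on that range. Since the lemma as stated only asserts non-decrease, both arguments are complete proofs of it.
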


\begin{proof}
From Eq. (\ref{Equation:SP-FEC}), if $R < k_i$ or $R \geq N - (n_i - k_i)$, $\mathrm{Pr}_{i}^{'}(R)$ is a constant.
Otherwise we can obtain
\begin{align*}
\mathrm{Pr}_{i}^{'}(R+1) & = \frac{ \hspace{-0.5em}\sum\limits_{x={k_i}}^{{\min\{R,n_i\}}} \hspace{-0.5em}  {\binom{R}{x} \binom{N-R-1}{n_i-x}}
+ \hspace{-0.5em} \sum\limits_{x={k_i-1}}^{{\min\{R,n_i-1\}}} \hspace{-0.5em}  {\binom{R}{x} \binom{N-R-1}{n_i-x-1}}}{\binom{N}{n_i}} \\
 & =  \mathrm{Pr}_{i}^{'}(R) + \frac{\binom{R}{k_i-1}\binom{N-R-1}{n_i-k_i}}{\binom{N}{n_i}}
\end{align*}
which shows $\mathrm{Pr}_{i}^{'}(R)$ is increasing strictly when $k_i \leq R < N - (n_i - k_i)$.
\end{proof}
Lemma \ref{Le:Increase} shows the recovery probability of any \emph{Layer i} increases with the number of the received symbols, which is in accordance with our intuition.
Furthermore, given a fixed received ratio of output symbols, we have Lemma \ref{Le:IncreaseWithN} when the output block size $N$ increases with a fixed overall coding rate.

\begin{lemma}\label{Le:IncreaseWithN}
If $r$ is any constant and $r > r_i$, $\mathrm{Pr}_{i}^{'}(N \cdot r)$ is an increasing function of the output block size $N$.
\end{lemma}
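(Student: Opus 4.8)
The plan is to read $\mathrm{Pr}_{i}^{'}(R)$ through its probabilistic meaning and to track how the gap between the recovery threshold and the typical number of received \emph{Layer i} symbols grows with $N$. Fix the three ratios $\eta_i = n_i/N$, $r_i = k_i/n_i$ and $r = R/N$; increasing $N$ at a fixed overall coding rate keeps these constant, so $n_i = \eta_i N$, $k_i = r_i \eta_i N$ and $R = rN$ all scale linearly. Since $X \sim \mathcal{H}(n_i,R,N)$ counts the \emph{Layer i} symbols among a uniformly random $R$-subset of the $N$ output symbols, its mean is $E[X] = R\,n_i/N = r\eta_i N$, whereas the threshold needed for recovery is $k_i = r_i\eta_i N$. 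Because $r > r_i$, the threshold sits a distance $(r-r_i)\eta_i N$ \emph{below} the mean, and this gap grows linearly in $N$.

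First I would compute the variance $\mathrm{Var}(X) = R\frac{n_i}{N}\frac{N-n_i}{N}\frac{N-R}{N-1}$, which scales only linearly in $N$, so the standard deviation grows like $\sqrt{N}$. The linear-in-$N$ gap therefore dominates the $\sqrt{N}$ spread and the event $\{X<k_i\}$ becomes a large deviation. Invoking the central limit theorem for the hypergeometric distribution, $(X-E[X])/\sqrt{\mathrm{Var}(X)} \Rightarrow \mathcal{N}(0,1)$, gives
\begin{equation*}
\mathrm{Pr}_{i}^{'}(Nr) = P(X \geq k_i) \approx \Phi\!\left( \frac{E[X]-k_i}{\sqrt{\mathrm{Var}(X)}} \right) = \Phi\!\left( \sqrt{N}\,\frac{(r-r_i)\,\eta_i}{\sqrt{r\,\eta_i(1-\eta_i)(1-r)}} \right).
\end{equation*}
Since $r > r_i$ the coefficient multiplying $\sqrt{N}$ is a positive constant and $\Phi$ is strictly increasing, so the right-hand side increases in $N$, which is exactly the claim.

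The main obstacle is that this display is only an approximation, so to obtain genuine term-by-term monotonicity I would replace the central-limit heuristic by one of two rigorous routes. The first is a Chebyshev bound, $P(X<k_i) \leq \mathrm{Var}(X)/\big((r-r_i)\eta_i N\big)^2 = O(1/N)$, which proves $\mathrm{Pr}_{i}^{'}(Nr) \to 1$ with a failure probability decaying like $1/N$ and already establishes the engineering conclusion that the \emph{SRR} shrinks as the block grows. The second, stronger route is a coupling argument along the common period $L$ of the rational ratios, i.e.\ over the admissible sizes $N, N+L, N+2L, \ldots$ for which $n_i, k_i, R$ stay integral, representing the received set by i.i.d.\ uniform labels with ``received'' taken as the $R$ smallest labels, then coupling the received-fraction $X/R$ at two successive sizes so that recovery at the smaller size forces recovery at the larger one. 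The delicate point, and the part I expect to fight with, is that a uniform $R$-subset does not sit deterministically inside a uniform larger subset — the order-statistic threshold $U_{(R)}$ itself fluctuates — so producing a genuinely \emph{monotone} coupling, rather than mere asymptotic concentration, is where the real work lies, and is why the clean normal picture above must be backed by the variance estimate.
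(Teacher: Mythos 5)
Your normal-approximation argument is essentially the paper's own proof: the paper likewise fixes the ratios $r = R/N$ and $\eta_i = n_i/N$, invokes the normal approximation to the hypergeometric distribution $X \sim \mathcal{H}(n_i,R,N)$, obtains $\mathrm{Pr}_{i}^{'}(Nr) \approx \Phi\bigl((r-r_i)\sqrt{N}/\sqrt{r(1-r)(1-\eta_i)}\bigr)$, and concludes from the monotonicity of $\Phi$, exactly as you do (your expression even retains a $\sqrt{\eta_i}$ factor that the paper's display drops, which is harmless since $\eta_i$ is constant). Your closing concern about upgrading the approximation to genuine term-by-term monotonicity is legitimate, but the paper does not address it either---it rests entirely on the approximate formula---so your Chebyshev bound (which yields only convergence to $1$, not monotonicity) and the admittedly unfinished coupling sketch are additions beyond, rather than corrections of, the paper's argument.
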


\begin{proof}
With the property of hypergeometric distribution, the mean $\mu$ and the variance $\sigma^2$ of $X$ are given by
\begin{equation}
\left. \mu = R \cdot \frac{n_i}{N}, \sigma^2 = N \cdot r(1-r)\frac{n_i}{N}(1-\frac{n_i}{N}) \right.
\end{equation}

If $N$ is large enough, $X$ can be approximated by a normal distribution, if the following conditions can be satisfied,
\begin{enumerate}
  \item $\frac{R}{N} \rightarrow r$ is a constant.
  \item $\frac{n_i}{N} \rightarrow \eta_i$ is a constant.
\end{enumerate}
Then $X$ approaches to $\mathcal{N}(\mu,\sigma^2)$, where
\begin{equation}\label{Eq:Approximate}
P(X=x | R,n_i,N) \approx \frac{1}{\sqrt{2\pi}\sigma}e^{-\frac{(x-\mu)^2}{2\sigma^2}}
\end{equation}
Since $k_i = n_ir_i = N \eta_i r_i$, we can derive with Eq. (\ref{Eq:Pr_i_Definition}),
\begin{equation}\label{Eq:ApproximatePr_i}
    \mathrm{Pr}_{i}^{'}(Nr) \approx 1 - \Phi\left(\frac{k_i - \mu}{\sigma}\right) = \Phi\left(\frac{(r-r_i)\sqrt{N}}{\sqrt{r(1-r)(1-\eta_i)}} \right)
\end{equation}
where $\Phi(x)$ is the cumulative distribution function of the standard normal distribution. With the monotonicity property of $\Phi(x)$, it's straightforward to show that $\mathrm{Pr}_{i}^{'}(Nr)$ increases with $N$ if $r > r_i$.
\end{proof}

Lemma \ref{Le:IncreaseWithN} shows that the recovery probability of each \emph{Layer i} will increase with $N$ when a certain ratio of symbols are received, as long as the ratio is more than $r_i$. If the PLR is no more than $1-r_i$, we can improve the performance by assigning a lager block size $K$.

\begin{lemma}\label{Le:Infinity}
If $r$ is any constant and $r > r_i$, when output block size $N \rightarrow \infty$, \begin{equation}\label{Eq:Infnity}
    \lim_{N\rightarrow \infty}{\mathrm{Pr}_{i}^{'}(N \cdot r)}=\mathrm{Pr}_{i}(N \cdot r)
\end{equation}
\end{lemma}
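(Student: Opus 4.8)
The plan is to evaluate both sides of Eq.~(\ref{Eq:Infnity}) separately and show they coincide. First I would pin down the right-hand side: since $r > r_i$ and $k_i^{\ast} = N \cdot r_i$ by Eq.~(\ref{K_star}), we have $N \cdot r > N \cdot r_i = k_i^{\ast}$ for every $N$, so the ideal-recovery expression in Eq.~(\ref{Equation:ProRateless}) gives $\mathrm{Pr}_{i}(N \cdot r) = 1$. Hence the claim reduces to showing that the SP-FEC recovery probability satisfies $\lim_{N \to \infty} \mathrm{Pr}_{i}^{'}(N \cdot r) = 1$.

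For the left-hand side, I would invoke the normal approximation already established in the proof of Lemma~\ref{Le:IncreaseWithN}. Under the stated regime $R/N \to r$ and $n_i/N \to \eta_i$, Eq.~(\ref{Eq:ApproximatePr_i}) yields
\begin{equation*}
    \mathrm{Pr}_{i}^{'}(N r) \approx \Phi\!\left(\frac{(r-r_i)\sqrt{N}}{\sqrt{r(1-r)(1-\eta_i)}}\right).
\end{equation*}
Because $r > r_i$, the numerator $(r - r_i)\sqrt{N}$ diverges to $+\infty$ while the denominator stays bounded (the parameters $r$ and $\eta_i$ are fixed constants), so the argument of $\Phi$ tends to $+\infty$. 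Since $\Phi(x) \to 1$ as $x \to +\infty$, I would conclude $\mathrm{Pr}_{i}^{'}(N r) \to 1$, matching the right-hand side.

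The delicate point is that Eq.~(\ref{Eq:ApproximatePr_i}) is only an asymptotic approximation, so passing to a limit through it is not automatically rigorous; the approximation error must itself be shown to vanish. The cleanest way to make this airtight is to bypass the Gaussian heuristic and bound the tail of the hypergeometric distribution directly. Writing $\mu = r \cdot n_i$ and noting $k_i = r_i n_i < \mu$, the gap $\mu - k_i = (r - r_i) n_i = (r-r_i)\eta_i N$ grows linearly in $N$, whereas the variance $\sigma^2$ grows only linearly in $N$ as well; a Chebyshev bound then gives
\begin{equation*}
    P(X < k_i) \leq \frac{\sigma^2}{(\mu - k_i)^2} = O\!\left(\frac{1}{N}\right) \to 0,
\end{equation*}
so $\mathrm{Pr}_{i}^{'}(Nr) = P(X \geq k_i) \to 1$. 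This concentration argument confirms that the value PRC attains for free, namely recovery probability exactly one once $r > r_i$, is precisely the large-$N$ limit that SP-FEC approaches, making Lemma~\ref{Le:Infinity} the formal statement that the two schemes become equivalent in the asymptotic regime.
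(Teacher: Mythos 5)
Your proposal is correct, and its skeleton matches the paper's own proof: the paper likewise disposes of the right-hand side by noting that $r > r_i$ forces $R = Nr > Nr_i = k_i^{\ast}$, hence $\mathrm{Pr}_{i}(N\cdot r) = 1$ by Eq.~(\ref{Equation:ProRateless}), and then argues $\mathrm{Pr}_{i}^{'}(N\cdot r) \rightarrow 1$ by citing Eq.~(\ref{Eq:ApproximatePr_i}) and the fact that $\Phi(x) \rightarrow 1$ as $x \rightarrow \infty$. Where you genuinely diverge is in the last paragraph: the paper stops at the Gaussian approximation, passing to the limit through the "$\approx$" in Eq.~(\ref{Eq:ApproximatePr_i}) without controlling the approximation error, which is exactly the delicate point you flag. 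Your Chebyshev argument closes that gap cleanly: since $\mu - k_i = (r - r_i)\eta_i N$ grows linearly in $N$ while $\sigma^2 = N r(1-r)\eta_i(1-\eta_i)$ also grows only linearly, the bound
\begin{equation*}
P(X < k_i) \;\leq\; \frac{\sigma^2}{(\mu - k_i)^2} \;=\; O\!\left(\frac{1}{N}\right)
\end{equation*}
is exact (no normal approximation needed), is elementary, and even yields a convergence rate. The trade-off is that the paper's route is shorter and reuses the machinery already set up for Lemma~\ref{Le:IncreaseWithN}, whereas yours is self-contained and rigorous; if anything, your concentration bound is what a careful referee would ask the authors to add.
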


\begin{proof}
Firstly we will show that $\mathrm{Pr}_{i}^{'}(N \cdot r) \rightarrow 1$.
This is quite easy to be shown from Eq. (\ref{Eq:ApproximatePr_i}), since $\Phi(x) \rightarrow 1$ when $x$ goes to infinity.

Recall Eq. (\ref{K_star}), if $r > r_i$, $R > Nr_i = k_i^{\ast}$. So we have $\mathrm{Pr}_{i}(N \cdot r) = 1$. Thus we we conclude the assertion.
\end{proof}

Lemma \ref{Le:Infinity} shows the asymptotic recovering probability of SP-FEC is equal to PRC, which means PRC seems to be optimal for the SP-FEC. We also notice that the SRR of SP-FEC approaches to that of PRC when $N$ grows. In other words, for moderate output block size of $N$, PRC will theoretically outperform SP-FEC with lower overhead. Next, we will show some numerical results to confirm those conclusions.

\section{Simulation Results}\label{SimResults}

In order to verify the asymptotic analysis developed in Section \ref{Analysis}, we performed numerical simulations of PRC compared with SP-FEC and LA-FEC \cite{hellge2011layer}.
In our simulations, we apply the Raptor codes specified in \cite{3GPP_MBMS} as the rateless codec in all approaches, which has achieved an efficient recovery with a very low overhead.

Assume that we have a two-layer multimedia data stream, e.g. a MPEG TS stream with a base layer and an enhanced layer, to be transmitted. The symbol length is $T = 188$ bytes. The layered steam data can be partitioned into message blocks of different $K$, where $\frac{k_1}{K} = \frac{5}{13}$ and $\frac{k_2}{K} = \frac{8}{13}$ are constant. To have a progressive recovery of the layered stream, we assign $r_1 = 0.5$ and $r_2 = 0.8$ for the two layers.

We have performed two cases on $(N = 1000,K = 650)$ and $(N = 500,K = 325)$ to evaluate the PLR of the two layers after recovering, shown in Fig. \ref{Fig.SimL1} and Fig. \ref{Fig.SimL2} respectively. It can be seen in the figures that both \emph{Layer 1} and \emph{Layer 2} of the PRC are recovered around $50\%$ and $80\%$ as expected with a very low overhead $2\%(N=1000)$ to $4\%(N=500)$. Thus we have achieved a progressive recovery of the layered stream.

Fig. \ref{Fig.SimL1} shows the PLR of \emph{Layer 1} of all the three approaches with the received ratio near $50\%$. It's clear that the PRC outperforms both SP-FEC and LA-FEC with reducing more than $5\%(N=1000)$ to $7\%(N=500)$ received symbols below the PLR of $10^{-4}$. When the output block size $N$ increases, the PLR at the same received ratio decreases. Moreover, the gap between PRC and the other two approaches becomes closer, which meet our conclusions in Lemma \ref{Le:IncreaseWithN} and Lemma \ref{Le:Infinity}. Note that the curves of SP-FEC and the LA-FEC are too close to tell, implying that the recovery performance of the two approaches are nearly the same. The reason is the efficient recovery of Raptor codes, i.e., \emph{Layer 1} has been almost recovered before \emph{Layer 2} is just to be recovered.

Fig. \ref{Fig.SimL2} demonstrates the PLR of \emph{Layer 2} with about $80\%$ output symbols received. It also indicates the advantage of PRC as in Fig. \ref{Fig.SimL1}. However, we can see in Table \ref{T:RecoverTime}, using the PRC, there will be a little longer delay for \emph{Layer 1}, while the buffering time for \emph{Layer 2} is saved. Consequently, our approach can provide each layer with a progressive recovery according to its importance. Moreover, it reduces the received overhead efficiently compared with the other UEP approaches.


\begin{figure}
\centering
\includegraphics[height = 6cm]{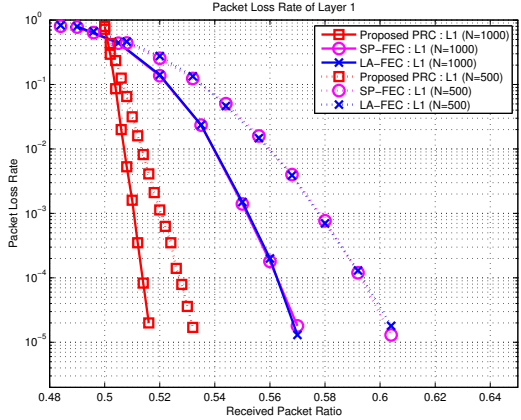}
\caption{Packet loss rate of \emph{Layer 1} of the proposed PRC versus other UEP approaches after receiving a certain ratio of output symbols. Both $N = 1000$ and $N = 500$ are illustrated.}
\label{Fig.SimL1}
\vspace{-1.0em}
\end{figure}
\begin{figure}
\centering
\includegraphics[height = 6cm]{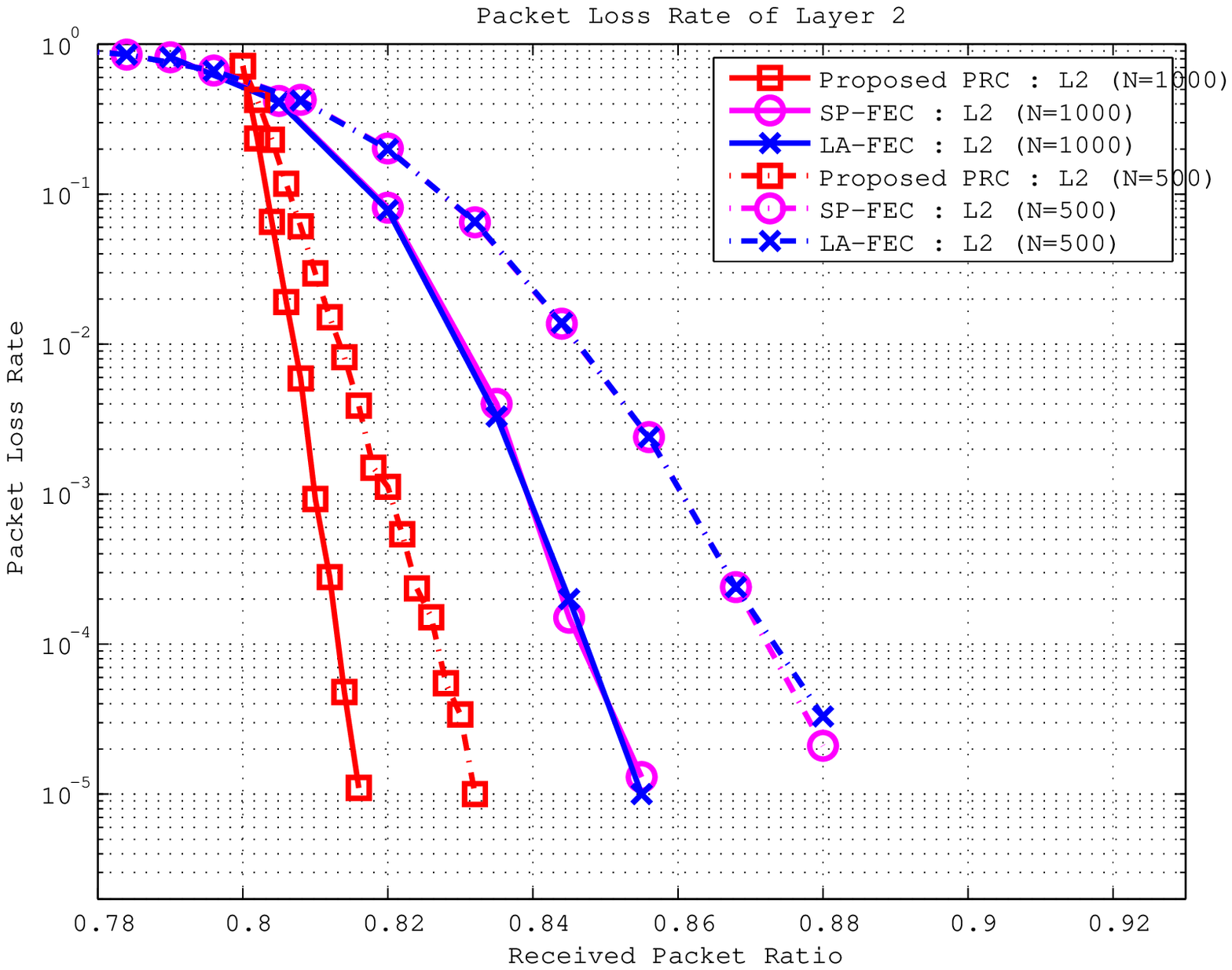}
\caption{Packet loss rate of \emph{Layer 2} of the proposed PRC versus other UEP approaches after receiving a certain ratio of output symbols. Both $N = 1000$ and $N = 500$ are illustrated.}
\label{Fig.SimL2}
\vspace{-1.0em}
\end{figure}


\begin{table}
\caption{Average buffering time to recover a message block, normalized by the time to send a symbol.}
\label{T:RecoverTime}
\centering
\begin{tabular}{ |c |c | c| c| c |}
  \hline
      & \multicolumn{2}{c|}{$N = 1000$}& \multicolumn{2}{c|}{$N = 500$} \\\cline{2-5}
      & \emph{Layer 1} & \emph{Layer 2} & \emph{Layer 1} & \emph{Layer 2} \\
  \hline
  PRC & 501.60 & 801.59 & 251.63 & 401.62 \\
  \hline
  SP-FEC & 251.62 & 901.62 & 127.01 & 451.49 \\
  \hline
  LA-FEC & 252.52 & 902.21 & 127.80 & 451.92 \\
  \hline
\end{tabular}
\vspace{-1.0em}
\end{table}

\section{Conclusion}\label{Conclusion}
In this paper we have presented a novel rateless codes with progressive recovery, on the basis of unequal error protection for layered multimedia delivery. Considering the different importance of layers of multimedia data, we protect each layer with unequal redundancy. Each output symbol in the proposed approach is packed with encoding data from all layers, whereas conventional approaches pack output symbols separately. Then the proposed approach shows promise of recovering the layered stream progressively according to the dependency at designated received ratio. In comparison with different related UEP approaches, theoretical and numerical results suggest the superiority of the proposed approach with reduced overhead of all stream layers, indicating that our approach can be widely employed to multimedia delivery implementations, especially for layered multicast or broadcast.

\section{Acknowledgment}
This research is supported in part by the NSFC under Grants No.61101072 and No.61021001, China Postdoctal Science Foundation under Grant No.2011M500327, the National Key Technology R\&D Program of China (2008BAH25B03).

\bibliographystyle{IEEEtran}
\bibliography{IEEEfull,CZBib}

\end{document}